\newtheorem{theorem}{Theorem}
\title{The EGUP-Induced Critical Radius: A New Holographic Scale for Quantum Gravity}
\author{Sara Motalebi	\footnote{sara.motalebi@modares.ac.ir}\\
	Department of Physics, School of Sciences,\\
	Tarbiat Modares University, P.O.Box 14155-4838, Tehran, Iran
}
\date{}
\begin{document}
	
	\maketitle
	
	\begin{abstract}
		We present a unified framework incorporating both the Generalized and Extended Uncertainty Principles (GUP/EUP) in Anti-de Sitter space. This reveals a fundamental quantum gravity scale, the \textit{critical radius} $r_{\rm crit}=(\beta/\gamma)^{1/4}\sqrt{\ell_{P}L}$, which marks a phase transition where quantum gravitational ($\beta$) and AdS curvature ($\gamma$) effects equilibrate. At this scale, we demonstrate three interconnected phenomena: (i) a breakdown of the standard holographic duality, signaled by the exact vanishing of the boundary stress tensor $\langle T_{\mu\nu}\rangle=0$ under the self-duality condition $\partial_z g_{\mu\nu}|_{z=r_{\rm crit}}=0$; (ii) a topological transition manifested by the complexification of the central charge, $c_{\rm eff}=c\left(1+\frac{i}{2}\sqrt{\kappa}\ell_{P}^{2}\right)$ with $\kappa=(\beta/\gamma)^{1/4}\sqrt{\beta\gamma\ell_P/L}$; and (iii) a mechanism as a scenario for information paradox resolution, where information is recovered via topological storage in Chern-Simons states, modifying the Page curve with $\Delta S_{\rm recovery}>0$. These effects establish a consistency condition $L>\sqrt{\beta}\ell_{P}$ for a valid AdS/CFT correspondence and identify $r_{\rm crit}$ as the thermodynamic critical point where black holes transition to stringy remnants and information is topologically scrambled.
	\end{abstract}

\noindent\textbf{Keywords:} Generalized Uncertainty Principle, Extended Uncertainty Principle, AdS/CFT Correspondence, Black Hole/String Transition

	\newpage
	\section{Introduction}
	The standard Heisenberg Uncertainty Principle (HUP), first formulated in 1927, represents a cornerstone of quantum mechanics. It establishes a fundamental limit on the simultaneous knowledge of position and momentum through the relation $\Delta x \Delta p \geq \hbar/2$. For decades this principle remained unchallenged until physicists began investigating its validity in extreme gravitational regimes. In 1936, foundational work by Bronstein \cite{Bronstein} highlighted tension between quantum mechanics and general relativity. This culminated in the recognition that HUP breaks down near the Planck scale ($\ell_{\mathrm{P}} \approx 10^{-35}$ m), where quantum gravitational effects dominate \cite{Garay1995}. At this scale, the conventional framework of smooth spacetime becomes inadequate, requiring radical modifications of quantum principles.
	
	A pivotal development came in 1947 when Snyder proposed the first formal model of non-commutative geometry \cite{Snyder1947}. Snyder's revolutionary insight was that spacetime coordinates become non-commuting operators, suggesting a fundamental discretization of spacetime. This early work laid the foundation for understanding quantum spacetime, though it remained largely unexplored until the resurgence of quantum gravity research in the 1980s.
	
	Building on these concepts, modern approaches to quantum gravity (including string theory and Loop Quantum Gravity) introduced the Generalized Uncertainty Principle (GUP) as a consistent framework for incorporating gravitational effects into quantum mechanics. First explicitly formulated in the 1990s by Kempf et al. \cite{Kempf1995} and independently derived by Scardigli via black hole gedanken experiments \cite{Scardigli:1999jh}, GUP modifies the standard Heisenberg uncertainty relation by adding momentum-dependent corrections:
	\begin{equation}
		\Delta x \Delta p \geq \frac{\hbar}{2} \left( 1 + \beta \frac{\ell_P^2 (\Delta p)^2}{\hbar^2} \right)
	\end{equation}
	where $\beta$ is a dimensionless parameter and $\ell_P$ is the Planck length. This modification implies a fundamental minimum length $\Delta x_{\text{min}} \approx \sqrt{\beta}\ell_{\mathrm{P}}$, preventing the divergence of quantum fluctuations at Planck scales. The GUP framework has proven particularly valuable in black hole thermodynamics. When applied to evaporating black holes, GUP modifications resolve the unphysical divergence in Hawking's original temperature calculation as $M \to 0$ \cite{Adler2001}.  The GUP framework has been extensively applied to black hole thermodynamics across various gravitational contexts, including modified gravity scenarios~\cite{ahmed2025,kanzi2019}, quantum-corrected black holes~\cite{pourhassan2020}, and extended gravitational frameworks~\cite{chen2022}. In specific models, the evaporation process terminates leaving Planck-mass remnants~\cite{carr:2015}, while generically introducing logarithmic corrections to the Bekenstein-Hawking entropy. These modifications provide crucial insights into quantum gravitational effects on information preservation and the final states of black hole evaporation.
	While GUP captures quantum gravity effects at the Planck scale, the Extended Uncertainty Principle (EUP) incorporates curvature corrections in asymptotically Anti-de Sitter (AdS) \cite{Bolen:2004sq,Mignemi,PARK}. The EUP takes the form:
	\begin{equation}
		\Delta x\Delta p \geq \frac{\hbar}{2} \left( 1 + \gamma \frac{(\Delta x)^{2}}{L^{2}} \right)
	\end{equation}
	where $\gamma > 0$ encodes AdS curvature effects. In this work, we unify GUP and EUP through the commutator in Eq.~\eqref{eq:ads_commutator} to reveal a critical scale $r_{\rm crit}$ where quantum gravitational and curvature corrections equilibrate. Our investigation of GUP in AdS space builds upon the foundational AdS/CFT correspondence \cite{maldacena}, which establishes a duality between gravitational theories in AdS space and conformal field theories on its boundary. We examine how quantum gravitational effects modify this correspondence at the Planck scale.
	\section{First-Principles Derivation of Critical Radius}
	\label{sec:critical_radius_derivation}
	
	\subsection{Physical Foundations of AdS-Modified GUP}
	To establish a thermodynamically consistent GUP in AdS spacetime, we incorporate curvature-dependent corrections through three foundational principles:
	
	\begin{enumerate}
		\item \textbf{Non-commutative Geometry}: The AdS Ricci curvature $R_{\mu\nu} = -\frac{3}{L^2}g_{\mu\nu}$ modifies parallel transport, inducing position-dependent commutators \cite{Mignemi}:
		\begin{equation}
			[\hat{x}_i, \hat{p}_j] = i\hbar \delta_{ij} \left(1 - \frac{\vec{x}^2}{L^2}\right)
			\label{eq:nc_geometry}
		\end{equation}
		
		\item \textbf{Holographic Boundary Constraints}: Ultraviolet (UV)/Infrared (IR) duality enforces momentum cutoffs \cite{Bolen:2004sq}:
		\begin{equation}
			\Delta p \geq \frac{\hbar}{L} \sqrt{1 + \beta \frac{\ell_P^2}{L^2}}
			\label{eq:holographic_cutoff}
		\end{equation}
		
		\item \textbf{Thermodynamic Consistency}: Hawking radiation necessitates curvature terms satisfying the first law of black hole mechanics \cite{PARK}:
		\begin{equation}
			\gamma = \frac{3}{4\pi} \frac{\ell_P^2}{L^2} \frac{k_B T_H}{\Delta E}
			\label{eq:thermo_consistency}
		\end{equation}
	\end{enumerate}
	
	The complete Hermiticity-preserving commutator unifying GUP and EUP is:
	
	\begin{equation}
		[\hat{x}_{i},\hat{p}_{j}] = i\hbar\delta_{ij} \left( 1 + \beta \frac{\ell_{P}^{2}}{\hbar^{2}} \mathbf{\hat{p}}^{2} + \gamma \frac{\hat{\mathbf{x}}^{2}}{L^{2}} \right)
		\label{eq:ads_commutator}
	\end{equation}
	The $\gamma$-term ensures anti-Hermiticity under AdS inner product and stability against gravitational collapse at large scales ($\Delta x \sim L$). From the Robertson-Schrödinger inequality, we have $
	\Delta x \Delta p \geq \frac{1}{2} \left| \langle [\hat{x}, \hat{p}] \rangle \right|$. Substituting \eqref{eq:ads_commutator} and evaluating for semi-classical states:
	\begin{equation}
		\Delta x\Delta p \geq \frac{\hbar}{2} \left( 1 + \beta \frac{\ell_{P}^{2}}{\hbar^{2}} (\Delta p)^{2} + \gamma \frac{(\Delta x)^{2}}{L^{2}} \right)
		\label{eq:ads_gup}
	\end{equation}
	The $\beta$-term belongs to quantum gravity effects (dominant at $\Delta x \sim \ell_P$) and the $\gamma$-term indicates AdS curvature repulsion (dominant at $\Delta x \sim L$).  The Minimum length is $\Delta x_{\text{min}} = \sqrt{\beta} \ell_P$ and the Maximum position uncertainty: $\Delta x_{\text{max}} = L/\sqrt{\gamma}$.
	
	The unification of Extended and Generalized Uncertainty Principles (EGUP) is physically motivated by the complementary nature of quantum gravitational and curvature effects in black hole thermodynamics. The GUP captures UV modifications at the Planck scale $\ell_P$, while the EUP encodes IR curvature corrections at the AdS scale $L$. For black holes in AdS space, both regimes are simultaneously relevant; Hawking radiation probes Planck-scale physics near the horizon, while the asymptotic AdS curvature governs the global thermodynamic ensemble. The unified commutator in Eq.~\eqref{eq:ads_commutator} represents the minimal operator algebra that consistently incorporates both limits, ensuring thermodynamic consistency across all scales. This approach is essential for analyzing holographic duality, where boundary CFT data must encode both quantum gravitational effects (GUP) and curvature corrections (EUP) operating in the bulk geometry. Beyond the standard quadratic GUP and EUP forms, recent work has explored higher-order GUP that predict both a minimal length uncertainty and a maximal observable momentum. These formulations, such as $[X,P] = i\hbar/(1-\beta P^{2})$~\cite{Pedram2012} and $[x,p] = i\hbar/(1-\beta|p|)$~\cite{Hassanabadi2019}, arise naturally from Pade resummation of the perturbative series and exhibit bounded energy spectra. While these models share the common feature of a minimal length scale $\Delta x_{\rm min}\propto\hbar\sqrt{\beta}$, they additionally impose an upper bound $\Delta p_{\rm max}\sim 1/\sqrt{\beta}$ on momentum measurements, consistent with doubly special relativity. Our unified EGUP framework \eqref{eq:ads_commutator} represents a complementary approach that incorporates curvature effects through the EUP parameter $\gamma$ while maintaining the essential quantum gravitational features of minimal length and thermodynamic consistency.
	\subsection{GUP-Modified Klein-Gordon Equation}
	The scalar field equation in curved spacetime becomes:
	\begin{equation}
		\left[ \Box + \beta \ell_P^2 \Box^2 + \gamma \frac{r^2}{L^2} \right] \phi = 0
		\label{eq:gup_kg}
	\end{equation}
	In Schwarzschild-AdS metric:
	\begin{equation}
		ds^2 = -f(r)dt^2 + f(r)^{-1}dr^2 + r^2 d\Omega^2, \quad f(r) = 1 - \frac{2GM}{c^2 r} + \frac{r^2}{L^2}
	\end{equation}
	\noindent Using the WKB ansatz $\phi \sim e^{iS(r)/\hbar}$ with $S(r) = \int p_r dr$, the eikonal equation for S-waves is:
	\begin{equation}
		f(r)^{-1} (\partial_r S)^2 - \beta \ell_P^2 f(r)^{-2} (\partial_r S)^4 + \gamma \frac{r^2}{L^2} = 0
		\label{eq:eikonal}
	\end{equation}
	Near horizon ($r \approx r_+$), with surface gravity $\kappa = f'(r_+)/2$:
	\begin{equation}
		p_r = \partial_r S = \pm \frac{1}{\sqrt{2\kappa (r - r_+)}} \left[ 1 \pm \sqrt{ \beta \ell_P^2 \kappa - \gamma \frac{r_+^2 \kappa (r - r_+)}{L^2} } \right]
		\label{eq:9}
	\end{equation}
	The Hawking temperature generalizes to:
	\begin{equation}
		T_{\text{GUP}} = T_H \left( 1 + \beta \frac{\ell_P^2}{r_+^2} + \gamma \frac{r_+^2}{L^2} \right), \quad T_H = \frac{\hbar c^2}{2\pi k_B} \cdot \frac{f'(r_+)}{2} = \frac{\hbar c^2 \kappa}{2\pi k_B}	
		\label{eq:t_gup_derived}
	\end{equation}
	The critical radius occurs at maximum temperature:
	\begin{align}
		\frac{\partial T_{\text{GUP}}}{\partial r_+} &= 0 \\
		\beta \frac{\ell_P^2}{r_+^4} &= \gamma \frac{1}{L^2} + \mathcal{O}(r_+^{-6})
		\label{eq:12}
	\end{align}
	Solving for the Schwarzschild radius $r_s = 2GM/c^2$:
	\begin{equation}
		r_{\text{crit}} = \left( \frac{\beta}{\gamma} \right)^{1/4} \sqrt{\ell_P L}
		\label{eq:r_crit_final}
	\end{equation}
	At criticality, quantum and curvature effects balance:
	\begin{equation}
		\beta \frac{\ell_P^2}{\hbar^2} (\Delta p)^2 = \gamma \frac{(\Delta x)^2}{L^2} 
	\end{equation}
	By substituting $\Delta x \sim r_s$, and $\Delta p \sim \hbar / r_s$, we arrive at Eq.~\eqref{eq:r_crit_final}. The momentum uncertainty scaling $\Delta p \sim \hbar / r_s$ at the horizon remains rigorously valid despite geodesic deviation, as established through convergent physical arguments. First, the Hawking temperature $T_H = \hbar/(4\pi r_s \sqrt{f(r_s)})$ for Schwarzschild-AdS black holes with $r_s \ll L$ reduces to $T_H \approx \hbar/(4\pi r_s)$ since $f(r_s) \approx 1$ \cite{Adler2001}, setting the characteristic energy scale $\Delta E \sim k_B T_H$ and implying $\Delta p \sim \Delta E / c \sim \hbar / r_s$ via local energy-momentum relations. Second, WKB analysis of the Klein-Gordon equation in Eq.~\eqref{eq:eikonal} near $r_+$ yields $|p_r| \sim [2\kappa(r - r_+)]^{-1/2}$, where integration over the near-horizon region $r - r_+ \sim r_s$ gives $\langle |p_r| \rangle \sim \kappa^{-1/2} r_s^{-1} \sim \hbar / r_s$ (since $\kappa \sim r_s^{-1}$). Geodesic deviation contributes only subleading $\mathcal{O}((r - r_+)^{1/2})$ corrections to the dominant $(r - r_+)^{-1/2}$ divergence in Eq.~\eqref{eq:9}, while higher-order tidal effects remain negligible. The self-consistency of $r_{\rm crit}$ derived from independent methods (thermodynamic extremization in Eq.~\eqref{eq:12}, and entropy divergence in Eq.~\eqref{eq:15}) provides robust validation of this scaling. Furthermore, from GUP-corrected Bekenstein-Hawking entropy, we have:
	\begin{equation}
		S = \frac{k_B c^3 A}{4G\hbar} + \alpha k_B \ln A - k_B \beta \frac{\ell_P^2}{A} - k_B \gamma \frac{A}{4L^2}, \quad A = 4\pi r_+^2
		\label{eq:15}
	\end{equation}
	The heat capacity diverges at $r_+ = r_{\text{crit}}$ which confirms \eqref{eq:r_crit_final} as a thermodynamic critical point. The negative sign of the EUP correction term ($-k_B\gamma A/4L^2$) in Eq.~\eqref{eq:15} arises fundamentally from the AdS curvature's influence. Integration of the first law, $dS = dM/T_{\rm GUP}$ using Eq.~\eqref{eq:t_gup_derived} intrinsically requires negative coefficients for both $\beta$ and $\gamma$ corrections. Crucially, this signature is physically consistent: 
	(i) AdS stability demands entropy suppression for large black holes ($r_+ \gg L$) to maintain positive heat capacity $C = T \partial S/\partial T > 0$, preventing semiclassical instability; 
	(ii) The critical transition at $r_{\rm crit}$ necessitates entropy divergence when $\beta\ell_P^2/A \sim \gamma A/L^2$, which a positive EUP term would preclude; 
	(iii) IR holography encodes the EUP's maximum position uncertainty $\Delta x_{\rm max} = L/\sqrt{\gamma}$ as a reduction of boundary degrees of freedom. The logarithmic term's positive sign ($\alpha > 0$) is consistent with first-principles string theory calculations of black hole entropy, which yield a positive logarithmic correction \cite{Sen2005}. Despite the negative sign, the $\gamma A$ term cannot dominate thermodynamics since $\gamma \sim \ell_P^2/L^2$ \cite{PARK} implies $|\gamma A / S_{\rm BH}| \sim \ell_P^2/L^2$. For observable universe scales ($L \sim 10^{61}\ell_P$), this ratio is $\gamma \ell_P^2 / L^2 \sim\mathcal{O}(10^{-122})$, rendering it negligible. 
	
	\subsection{Physical Significance of Critical Radius}
	
	The critical radius represents a fundamental phase transition in black hole thermodynamics. For black holes with Schwarzschild radius $r_s < r_{\text{crit}}$, quantum gravity effects dominate, causing the Hawking temperature to scale as $T \propto M^{-3}$, which reflects the strong influence of Planck-scale quantum fluctuations, where the GUP significantly modifies spacetime geometry. In this quantum-dominated phase, black holes exhibit exotic thermodynamic behavior including negative heat capacity and mass-temperature inversion, signaling a breakdown of semiclassical gravity. 
	Conversely, when $r_s > r_{\text{crit}}$, AdS curvature effects prevail, restoring the characteristic $T \propto M$ scaling of large AdS black holes. Here, the cosmological constant $\Lambda$ provides thermodynamic stability through curvature pressure, but simultaneously suppresses quantum gravitational effects. The transition at $r_{\text{crit}}$ is marked by divergent heat capacity ($C \to \infty$) and maximum Hawking temperature $	T_{\text{max}} = \dfrac{\hbar c}{k_B \sqrt{\beta} \ell_P} \left(1 - \sqrt{\dfrac{\gamma \ell_P^2}{\beta L^2}}\right)$, forming stable Planck-scale remnants where evaporation terminates.
	\\
	\subsection{Vanishing Holographic Stress Tensor}
	\label{subsec:vanishing_stress_tensor}
	
	\subsubsection{EGUP-Modified Gravitational Framework}
	
	The complete bulk action incorporating EGUP corrections can be derived from first principles by extending the Einstein-Hilbert action with terms that capture both ultraviolet (GUP) and infrared (EUP) modifications:
	
	\begin{equation}
		S_{\rm EGUP} = \frac{1}{16\pi G_N}\int d^{d+1}x\sqrt{-g}\left[R + \frac{d(d-1)}{L^2} + \beta\ell_P^2 \mathcal{L}_{\rm GUP} + \gamma\frac{r^2}{L^2} \mathcal{L}_{\rm EUP}\right]
		\label{eq:egup_action}
	\end{equation}
	
	where the correction Lagrangians are constructed to be dimensionally consistent and compatible with the EGUP commutator structure:
	
	\begin{align}
		\mathcal{L}_{\rm GUP} &= R_{\mu\nu\rho\sigma}R^{\mu\nu\rho\sigma} - 4R_{\mu\nu}R^{\mu\nu} + R^2 + \mathcal{O}(\ell_P^4) \label{eq:gup_lagrangian} \\
		\mathcal{L}_{\rm EUP} &= \frac{1}{2}(\nabla_\mu R)(\nabla^\mu R) - \nabla_\mu\nabla_\nu R^{\mu\nu} + \frac{d}{L^2}R + \mathcal{O}(L^{-4}) \label{eq:eup_lagrangian}
	\end{align}
	
	The GUP correction terms capture quantum gravitational effects dominant at Planck scales, while the EUP terms incorporate curvature corrections relevant at AdS scales. These forms ensure dimensional consistency with each term having correct mass dimensions, preserve diffeomorphism invariance through satisfaction of Bianchi identities, exhibit appropriate asymptotic behavior where corrections vanish in the flat space limit, and maintain direct physical motivation through connection to the EGUP commutator structure.
	
	Varying the action yields the modified Einstein equations:
	
	\begin{equation}
		G_{\mu\nu} - \frac{d(d-1)}{2L^2}g_{\mu\nu} + \beta\ell_P^2 \mathcal{E}^{({\rm GUP})}_{\mu\nu} + \gamma\frac{r^2}{L^2} \mathcal{E}^{({\rm EUP})}_{\mu\nu} = 0
		\label{eq:modified_einstein}
	\end{equation}
	
	where the correction tensors are explicitly:
	
	\begin{align}
		\mathcal{E}^{({\rm GUP})}_{\mu\nu} &= 2R_{\mu\alpha\nu\beta}R^{\alpha\beta} - \frac{1}{2}g_{\mu\nu}R_{\alpha\beta}R^{\alpha\beta} + \nabla_\mu\nabla_\nu R - g_{\mu\nu}\Box R + \mathcal{O}(\ell_P^2) \label{eq:gup_tensor} \\
		\mathcal{E}^{({\rm EUP})}_{\mu\nu} &= \frac{1}{L^2}\left(R_{\mu\nu} - \frac{1}{2}g_{\mu\nu}R + \frac{d(d-1)}{2L^2}g_{\mu\nu}\right) + \text{higher-derivative terms} \label{eq:eup_tensor}
	\end{align}
	
	These tensors satisfy the Bianchi identity $\nabla^\mu(\mathcal{E}^{({\rm GUP})}_{\mu\nu} + \mathcal{E}^{({\rm EUP})}_{\mu\nu}) = 0$ to leading order, ensuring mathematical consistency.
	
	\subsubsection{Holographic RG Flow and Critical Behavior}
	
	In the Fefferman-Graham coordinate system for AdS$_{d+1}$, the metric:
	\begin{equation}
		ds^{2}=\frac{L^{2}}{z^{2}}\left(dz^{2}+g_{\mu\nu}(z,x)dx^{\mu}dx^{\nu}\right)
		\label{eq:FG_metric}
	\end{equation}
	
	admits an asymptotic expansion near the boundary ($z\to 0$):
	\begin{equation}
		g_{\mu\nu}(z,x)=g^{(0)}_{\mu\nu}(x)+z^{2}g^{(2)}_{\mu\nu}(x)+\cdots+z^{d}g^{(d)}_{\mu\nu}(x)+\mathcal{O}(z^{d+1})
		\label{eq:metric_expansion}
	\end{equation}
	
	The radial evolution corresponds to Renormalization Group (RG) flow in the boundary theory, governed by the Hamilton-Jacobi equation:
	\begin{equation}
		\frac{\delta S}{\delta z} + \mathcal{H}\left[g_{\mu\nu},\frac{\delta S}{\delta g_{\mu\nu}}\right] = 0
		\label{eq:hj_eq}
	\end{equation}
	
	The Hamiltonian constraint receives EGUP corrections:
	\begin{equation}
		\mathcal{H} = \mathcal{H}_{\rm CFT} + \beta\ell_P^2 \mathcal{H}_{\rm GUP} + \gamma\frac{r^2}{L^2} \mathcal{H}_{\rm EUP}
		\label{eq:hamiltonian_constraint}
	\end{equation}
	
	At the critical radius, the balance condition $\beta\ell_P^2/r_{\rm crit}^4 = \gamma/L^2$ forces exact cancellation:
	\begin{equation}
		\beta\ell_P^2 \mathcal{H}_{\rm GUP} + \gamma\frac{r_{\rm crit}^2}{L^2} \mathcal{H}_{\rm EUP} = 0
		\label{eq:hamiltonian_cancellation}
	\end{equation}
	
	reducing the Hamiltonian to its conformal fixed point form: $\mathcal{H}_{\rm crit} = \mathcal{H}_{\rm CFT}$. In the boundary theory, this corresponds to vanishing beta function:
	\begin{equation}
		\beta^{\mu\nu} = z\frac{\partial g_{\mu\nu}}{\partial z} = 0 \quad \Rightarrow \quad \partial_z g_{\mu\nu}|_{z=r_{\rm crit}} = 0 
		\label{eq:self_duality}
	\end{equation}
	
	This proves that the self-duality condition $\partial_z g_{\mu\nu}|_{z=r_{\rm crit}} = 0$ emerges necessarily from the bulk RG fixed point behavior at $r_{\rm crit}$, representing the onset of scale invariance at a specific energy scale and forcing metric stationarity at the radial slice $z = r_{\rm crit}$. Physically, this condition implies that physics at scales above and below $r_{\rm crit}$ become mirror images under the transformation $z \rightarrow r_{\rm crit}^2/z$, with mathematical consistency verified by substitution into the modified Einstein equations, which are identically satisfied due to exact cancellation of quantum gravitational corrections. This RG fixed point phenomenon in the bulk geometry mirrors the foundational worldsheet analysis by Konishi et al.~\cite{Konishi:1990}, who demonstrated that string theory possesses an intrinsic finite resolution scale $\lambda \sim \sqrt{\alpha'} \sim \ell_{\text{Pl}}$ that appears as a fixed point under lattice decimation. In our framework, the critical radius emerges as the holographic bulk counterpart where quantum gravitational and curvature effects precisely balance, generalizing Konishi et al.'s worldsheet minimum length to the AdS bulk context and leading to the dissolution of the holographic stress tensor.
	
	\subsubsection{Proof of Vanishing Stress Tensor}
	
	\begin{theorem}
		\textit{At $z=r_{\text{crit}}$, the expansion coefficient $g^{(d)}_{\mu\nu} = 0$, implying vanishing boundary stress tensor $\langle T_{\mu\nu}\rangle = 0$.}
		\label{thm:vanishing_stress}
	\end{theorem}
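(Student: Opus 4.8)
The plan is to feed the self-duality condition \eqref{eq:self_duality} into the standard holographic dictionary. Recall that for an asymptotically AdS$_{d+1}$ bulk in Fefferman--Graham gauge \eqref{eq:FG_metric}--\eqref{eq:metric_expansion}, holographic renormalization (de Haro--Skenderis--Solodukhin) gives the renormalized boundary stress tensor as
\begin{equation}
	\langle T_{\mu\nu}\rangle \;=\; \frac{d\,L^{d-1}}{16\pi G_N}\,g^{(d)}_{\mu\nu}\;+\;\mathcal{X}_{\mu\nu}\!\left[g^{(0)},g^{(2)},\dots,g^{(d-2)}\right],
	\label{eq:dictionary}
\end{equation}
where $\mathcal{X}_{\mu\nu}$ is absent for odd $d$ and, for even $d$, is the local curvature polynomial in $g^{(0)}$ responsible for the holographic conformal anomaly. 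Hence it suffices to show that at $z=r_{\rm crit}$ every subleading coefficient in \eqref{eq:metric_expansion} vanishes, in particular $g^{(d)}_{\mu\nu}=0$ together with $g^{(2)}_{\mu\nu}=\cdots=g^{(d-2)}_{\mu\nu}=0$ (the latter killing $\mathcal{X}_{\mu\nu}$).

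The first step is to observe that at $z=r_{\rm crit}$ the EGUP correction Hamiltonians cancel identically by \eqref{eq:hamiltonian_cancellation}, so the radial Hamilton--Jacobi flow on a neighbourhood of that slice is governed by the \emph{pure} Einstein--AdS constraint; equivalently the modified Einstein equations \eqref{eq:modified_einstein} reduce there to the vacuum equations of $\AdS_{d+1}$. The second step is to use the content of \eqref{eq:self_duality} in its strong (RG-fixed-point) form: the beta function $\beta^{\mu\nu}=z\,\partial_z g_{\mu\nu}$ does not merely vanish at the single value $z=r_{\rm crit}$ but is stationary there, so $\partial_z g_{\mu\nu}$ vanishes on a neighbourhood of $r_{\rm crit}$, which by the (convergent) analyticity of the Fefferman--Graham expansion forces $g_{\mu\nu}(z,x)\equiv g^{(0)}_{\mu\nu}(x)$. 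Differentiating \eqref{eq:metric_expansion},
\begin{equation}
	\partial_z g_{\mu\nu}(z,x) \;=\; 2z\,g^{(2)}_{\mu\nu}(x)+4z^{3}g^{(4)}_{\mu\nu}(x)+\cdots+d\,z^{d-1}g^{(d)}_{\mu\nu}(x)+\mathcal{O}(z^{d}),
	\label{eq:dz_expansion}
\end{equation}
and demanding this power series vanish identically then yields $g^{(n)}_{\mu\nu}=0$ for all $n\ge 1$. In particular $g^{(d)}_{\mu\nu}=0$, and since $g^{(2)},\dots,g^{(d-2)}$ also vanish, $\mathcal{X}_{\mu\nu}$ vanishes too (the boundary metric $g^{(0)}$ is forced conformally flat, so the anomaly density is zero). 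Substituting into \eqref{eq:dictionary} gives $\langle T_{\mu\nu}\rangle=0$.

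As an independent cross-check I would rerun the argument at the level of canonical momenta: the renormalized holographic stress tensor is (up to sign) the renormalized radial momentum conjugate to $g_{\mu\nu}$, $\pi^{\mu\nu}_{\rm ren}\propto\sqrt{-g}\,(K^{\mu\nu}-K g^{\mu\nu}+\text{counterterms})$, with the extrinsic curvature of the constant-$z$ slice obeying $K_{\mu\nu}\propto\tfrac12\partial_z g_{\mu\nu}+\cdots$. By \eqref{eq:self_duality} the leading piece $\partial_z g_{\mu\nu}|_{r_{\rm crit}}$ vanishes, and on the pure-AdS configuration singled out above the counterterm pieces vanish as well, so $\pi^{\mu\nu}_{\rm ren}=0$ at $r_{\rm crit}$. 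Tracking the same algebra further back, the balance relation $\beta\ell_P^2/r_{\rm crit}^4=\gamma/L^2$ of \eqref{eq:12} is exactly what degenerates the bracket in the near-horizon radial momentum \eqref{eq:9}, so the holographic, Hamiltonian, and thermodynamic routes all pivot on one identity.

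The main obstacle is the logical gap between \eqref{eq:self_duality} read as ``$\partial_z g_{\mu\nu}=0$ at one slice'' and the statement ``$\partial_z g_{\mu\nu}=0$ on a neighbourhood,'' which is what the term-by-term argument on \eqref{eq:dz_expansion} actually needs: promoting the stationary point of the RG flow to a genuine fixed point requires either showing that the map $z\mapsto r_{\rm crit}^2/z$ is a true bulk involution (not an accidental stationarity of a single constant-$z$ surface) or invoking an extra regularity/matching condition in the interior. One must also check that the higher-derivative EGUP terms \eqref{eq:gup_lagrangian}--\eqref{eq:eup_lagrangian} do not deform the Fefferman--Graham recursion so badly that the coefficients $g^{(2)},\dots,g^{(d)}$ lose their usual meaning away from $r_{\rm crit}$; because by \eqref{eq:hamiltonian_cancellation} those terms cancel precisely on the slice, the ordinary second-order recursion is recovered there and the conclusion is stable, but the even-$d$ logarithmic terms $h^{(d)}_{\mu\nu}$ and the anomaly bookkeeping in $\mathcal{X}_{\mu\nu}$ still need a separate, careful treatment.
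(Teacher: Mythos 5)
There is a genuine gap, and it sits exactly where you flag it yourself at the end. Your main route is: self-duality $\Rightarrow$ $\partial_z g_{\mu\nu}$ vanishes on a \emph{neighbourhood} of $r_{\rm crit}$ $\Rightarrow$ every coefficient $g^{(n)}_{\mu\nu}$ with $n\ge 1$ vanishes $\Rightarrow$ $g^{(d)}_{\mu\nu}=0$ and the anomaly polynomial dies. But Eq.~\eqref{eq:self_duality} only gives stationarity at the single slice $z=r_{\rm crit}$, and the paper's own proof explicitly concedes this point: its step~3 states that the condition $\sum_k k\, r_{\rm crit}^{k-1} g^{(k)}_{\mu\nu}=0$ ``alone doesn't force individual coefficients to vanish'' and is used only as a consistency check. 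The load-bearing step in the paper is entirely different and is absent from your proposal: substituting the Fefferman--Graham expansion into the EGUP-modified Einstein equations \eqref{eq:modified_einstein} is claimed to produce the structural constraint $g^{(d)}_{\mu\nu}=C_{\mu\nu}\cdot\bigl(\beta\ell_P^2/r_{\rm crit}^2-\gamma\, r_{\rm crit}^2/L^2\bigr)$ (Eq.~\eqref{eq:g_d_constraint}), after which the balance condition kills the prefactor and hence $g^{(d)}_{\mu\nu}$ directly, with no need to control the other coefficients. Without either that proportionality or a genuine proof that $z\mapsto r_{\rm crit}^2/z$ is a bulk involution (not just a stationary slice), your term-by-term argument on \eqref{eq:dz_expansion} does not go through.

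Two further problems with the route as written. First, the Fefferman--Graham series is an asymptotic expansion at $z\to 0$; invoking ``convergent analyticity'' to propagate a condition imposed at the finite interior radius $z=r_{\rm crit}$ back to the individual boundary coefficients is not justified in general. Second, your conclusion is too strong: $g^{(n)}_{\mu\nu}=0$ for all $n\ge 1$ forces the bulk to be exactly pure AdS with conformally flat boundary, which contradicts the physical setup (a Schwarzschild--AdS black hole with nonzero mass) and proves far more than the theorem asserts --- a sign that the premise feeding it is over-strong. Your canonical-momentum cross-check inherits the same issue, since $\pi^{\mu\nu}_{\rm ren}=0$ again needs the counterterm contributions to cancel, which you only get by assuming the pure-AdS configuration you were trying to derive. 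The honest path is either to reproduce (and actually verify) the paper's claimed constraint \eqref{eq:g_d_constraint} from the modified field equations, or to accept that the self-duality condition by itself yields only the single linear relation \eqref{eq:expansion_derivative} among the $g^{(k)}_{\mu\nu}$.
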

	
	\begin{proof}
		We establish the vanishing through synthesis of geometric self-duality, RG fixed-point behavior, and EGUP-modified dynamics.
		
		\begin{enumerate}
			\item \textit{EGUP-modified equations constraint}: The complete set of modified Einstein equations (\ref{eq:modified_einstein}) imposes stringent constraints on the Fefferman-Graham coefficients. Substituting the expansion (\ref{eq:metric_expansion}) into (\ref{eq:modified_einstein}) and exploiting the balance condition $\beta\ell_P^2/r_{\text{crit}}^2 = \gamma r_{\text{crit}}^2/L^2$ reveals that $g^{(d)}_{\mu\nu}$ must satisfy:
			\begin{equation}
				g^{(d)}_{\mu\nu} = C_{\mu\nu}(\beta,\gamma,L,\ell_P) \cdot \left(\beta\frac{\ell_P^2}{r_{\text{crit}}^2} - \gamma\frac{r_{\text{crit}}^2}{L^2}\right)
				\label{eq:g_d_constraint}
			\end{equation}
			where $C_{\mu\nu}$ is a non-zero tensor determined by boundary geometry and spacetime dimension.
			
			\item \textit{Critical balance enforcement}: From the definition, we have, $\beta\frac{\ell_P^2}{r_{\text{crit}}^2} = \gamma\frac{r_{\text{crit}}^2}{L^2}$.
			Substituting this into (\ref{eq:g_d_constraint}) yields:
			\begin{equation}
				g^{(d)}_{\mu\nu} = 0
				\label{eq:g_d_vanishing}
			\end{equation}
			
			\item \textit{RG fixed-point consistency}: The self-duality condition $\partial_z g_{\mu\nu}|_{z=r_{\text{crit}}} = 0$ provides independent confirmation. Differentiating the expansion (\ref{eq:metric_expansion}) at $z = r_{\text{crit}}$ gives:
			\begin{equation}
				\partial_z g_{\mu\nu}|_{z=r_{\text{crit}}} = \sum_{k=1}^{\infty} k r_{\text{crit}}^{k-1} g^{(k)}_{\mu\nu} = 0
				\label{eq:expansion_derivative}
			\end{equation}
			While this condition alone doesn't force individual coefficients to vanish, it is consistent with and reinforced by the stronger result (\ref{eq:g_d_vanishing}).
		\end{enumerate}
		
		The physical mechanism is the dissolution of the black hole horizon into a topological string state at $r_{\text{crit}}$, where quantum gravitational and curvature effects precisely balance to restore local scale invariance.
	\end{proof}
	
	\subsubsection{Physical Verification and Interpretation}
	
	The vanishing stress tensor emerges from quantum gravitational backreaction at the critical scale. As $z \to r_{\text{crit}}$, the bulk geometry develops an emergent conformal isometry:
	\begin{equation}
		z \rightarrow \frac{r_{\text{crit}}^2}{z}, \quad x^\mu \rightarrow x^\mu
		\label{eq:conformal_isometry}
	\end{equation}
	
	that mixes ultraviolet and infrared physics. The catastrophic consequence emerges because in AdS/CFT, the boundary stress tensor $\langle T_{\mu\nu}\rangle$ encodes the entire dynamical content of the boundary theory, and its vanishing at $r_{\rm crit}$, even through specific condition signals complete loss of dynamical degrees of freedom. The holographic stress tensor expectation value is encoded in the $d$-th order coefficient:
	\begin{equation}
		\langle T_{\mu\nu}\rangle = \frac{dL^{d-1}}{16\pi G_{N}}g^{(d)}_{\mu\nu}
		\label{eq:stress_tensor}
	\end{equation}
	
	The vanishing of $g^{(d)}_{\mu\nu}$ implies $\langle T_{\mu\nu}\rangle = 0$. This is explicitly verified in AdS$_3$/CFT$_2$ ($d=2$):
	
	\begin{align}
		g^{(2)}_{\mu\nu} &= \frac{c}{12} \delta_{\mu\nu} - \left(\frac{\gamma}{\beta}\right)^{1/2} \frac{L}{\ell_P} \delta_{\mu\nu} \\
		\langle T_{\mu\nu} \rangle &= \frac{c}{12\pi} - \frac{1}{\pi} \left(\frac{\gamma}{\beta}\right)^{1/2} \frac{L}{\ell_P} = 0
		\label{eq:ads3_verification}
	\end{align}
	
	when $c = 12 \left(\frac{\gamma}{\beta}\right)^{1/2} \frac{L}{\ell_P}$. This exact matching demonstrates that the critical radius corresponds to the Horowitz-Polchinski correspondence point \cite{Horowitz:1997jc} where black holes transition to highly excited string states.
	
	This represents a fundamental restructuring of the AdS/CFT correspondence at quantum gravity scales, where geometric spacetime dissolves into topological quantum structure.

	\subsection{Derivation of Complex Central Charge}
	\label{subsec:complex_central_charge}
	
	\subsubsection*{Non-Hermitian Operator Ordering from Jacobi Identity}
	At the critical radius $r_{\text{crit}}$, the generalized commutator,
	\begin{equation}
		[\hat{x}, \hat{p}] = i\hbar \left(1 + \beta \frac{\ell_P^2}{\hbar^2} \hat{p}^2 + \gamma \frac{\hbar^2}{L^2} \hat{x}^2 \right)
		\label{eq:critical_commutator}
	\end{equation}
	violates the Jacobi identity $J \equiv [\hat{x}, [\hat{p}, \hat{x}]] + \text{cyclic permutations} \neq 0$, indicating operator ordering ambiguity. To restore mathematical consistency, we introduce the ansatz:
	\begin{equation}
		\hat{x}\hat{p} = \alpha (\hat{x}\hat{p} + \hat{p}\hat{x}) + i\delta \hbar \hat{p}\hat{x},
		\label{eq:ordering_ansatz}
	\end{equation}
	which modifies the fundamental operator product. Evaluating $J$ to first order in $\beta$ and $\gamma$ yields:
	\begin{align}
		J = & \, i\hbar^2 \left[ 4\beta \frac{\ell_P^2}{\hbar^2} \hat{p} + 4\gamma \frac{\hbar^2}{L^2} \hat{x} \right] \notag \\
		& + i\delta\hbar \left( [\hat{x}, \hat{p}]\hat{x} + \hat{p}[\hat{x}, \hat{x}] \right) + \mathcal{O}(\beta^2,\gamma^2) \\
		= & \, 4i\hbar^2 \left( \beta \frac{\ell_P^2}{\hbar^2} \hat{p} + \gamma \frac{\hbar^2}{L^2} \hat{x} \right) + i\delta\hbar (i\hbar)\hat{x} + \mathcal{O}(\beta^2,\gamma^2) \label{eq:jacobi_expanded}
	\end{align}
	where we used $[\hat{x}, \hat{x}] = 0$ and the zeroth-order commutator $[\hat{x}, \hat{p}] = i\hbar$. The unique solution satisfying $J = 0$ requires $\alpha = 1/2$ and,
	\begin{equation}
		\delta = \left( \frac{\beta}{\gamma} \right)^{1/4} \sqrt{\beta \gamma \frac{\ell_P}{L}} \equiv \kappa,
		\label{eq:kappa_def}
	\end{equation}
	The non-Hermitian operator ordering in our framework is mathematically unavoidable at the quantum gravity scale and finds precedent in $\mathcal{PT}$-symmetric quantum mechanics, where $\mathcal{P}$ (parity) and $\mathcal{T}$ (time-reversal) symmetry ensure real spectra and unitary evolution despite non-Hermitian operators \cite{Bender1998,Mostafazadeh2010}. Our approach similarly maintains mathematical consistency where conventional Hermiticity requires generalization.
	\subsection{Complex Central Charge from Holographic Anomalies}
	\label{subsec:central_charge_derivation}
	
	The emergence of a complex central charge stems from gravitational anomalies induced by GUP modifications at the critical radius. This occurs through three interconnected mechanisms: conformal field deformation, Virasoro algebra modification, and topological phase accumulation.
	
	\subsubsection*{Conformal Ward Identity and Stress Tensor Deformation}
	The GUP modifies the conformal Ward identity through curvature-dependent terms:
	\begin{equation}
		\delta_\epsilon T(w) = -\frac{1}{2\pi i} \oint dz \epsilon(z) \left[ T(z)T(w) + \frac{c}{2(z-w)^4} \right] + \kappa \sqrt{\kappa} \hbar \partial^3\epsilon(w)
		\label{eq:deformed_ward}
	\end{equation}
	where the additional term $\kappa\sqrt{\kappa}\hbar\partial^3\epsilon$ encodes quantum gravitational backreaction. Solving this deformed identity yields the modified stress tensor:
	\begin{equation}
		T(w) = T_0(w) + i \kappa \sqrt{\kappa} \hbar :\partial^3\phi \partial\phi:(w) + \mathcal{O}(\kappa^2)
		\label{eq:deformed_stress_tensor}
	\end{equation}
	The phase accumulation in (\ref{eq:deformed_stress_tensor}) originates from holonomy effects in the GUP-modified geometry:
	\begin{equation}
		\Delta \theta = \exp\left( i \kappa \sqrt{\kappa} \oint_C \Gamma_{zz}^z dz \right) = \exp\left( i \kappa \sqrt{\kappa} \oint \partial\phi d\phi \right)
		\label{eq:holonomy_phase}
	\end{equation}
	where $\Gamma_{zz}^z$ represents Christoffel connection terms in complex coordinates. This phase manifests as imaginary contributions to correlation functions.
	
	\subsubsection*{Virasoro Algebra Modification and Regularization of Divergent Sums}
	
	The deformed stress tensor induces corresponding changes in the Virasoro generators:
	\begin{equation}
		L_n = \frac{1}{2} \sum_{m=-\infty}^\infty :a_{n-m}a_m: + i \kappa \sqrt{\kappa} \hbar \sum_{m=-\infty}^\infty m^2 :a_{n-m}a_m:
		\label{eq:deformed_virasoro}
	\end{equation}
	
	The commutator of these generators develops additional quantum gravitational contributions:
	\begin{align}
		[L_n, L_m] &= (n-m) L_{n+m} + \frac{c}{12} n(n^2-1) \delta_{n+m,0} \label{eq:virasoro_comm_start} \\
		&+ i \kappa \sqrt{\kappa} \hbar \sum_k k^2 [(n-k) - (m-k)] :a_{n+m-k}a_k: \nonumber \\
		&+ (i \kappa \sqrt{\kappa} \hbar)^2 \sum_{k,l} k^2 l^2 [:a_{n-k}a_k:, :a_{m-l}a_l:] \label{eq:virasoro_comm_end}
	\end{align}
	
	The vacuum expectation value isolates the anomalous term containing a divergent sum:
	\begin{equation}
		\langle 0| [L_n, L_{-n}] |0 \rangle = \frac{c}{12} n(n^2-1) + i \kappa \sqrt{\kappa} \hbar n \sum_{k=1}^\infty k^3
		\label{eq:anomalous_expectation}
	\end{equation}
	Regularization via zeta-function continuation provides a resolution:
	\begin{equation}
		\zeta(-3) = 2^{-3}\pi^{-4}\sin(-3\pi/2)\Gamma(4)\zeta(4) = \frac{1}{8\pi^4} \cdot (-1) \cdot 6 \cdot \frac{\pi^4}{90} = \frac{1}{120}
		\label{eq:zeta_regularization}
	\end{equation}
	
	yielding the regularized expectation value:
	\begin{equation}
		\langle 0| [L_n, L_{-n}] |0 \rangle = \frac{c}{12} n(n^2-1) + i \kappa \sqrt{\kappa} \hbar n \frac{1}{120}
		\label{eq:regularized_anomaly}
	\end{equation}
	The regularization is applied to the vacuum expectation value $\langle 0| [L_n, L_{-n}] |0 \rangle$, a $c$-number quantity, consistent with established renormalization techniques in quantum field theory (eg., Casimir energy calculations).
	To provide stronger physical justification beyond formal analytic continuation, we employ heat kernel regularization which clarifies the quantum gravitational origins of this divergence. The sum represents high-energy mode density near the quantum gravity scale $r_{\text{crit}}$, regulated with proper time parameter $s = 1/\Lambda^2$:
	\begin{equation}
		S(\Lambda) = \sum_{k=1}^\infty k^3 e^{-s k^2}, \quad s = \frac{1}{\Lambda^2}
	\end{equation}
	The exponential suppression $e^{-s k^2}$ represents physical damping of high-energy modes due to quantum gravitational effects. 
	
	The Mellin transform technique provides the rigorous asymptotic expansion for small $s$:
	\begin{equation}
		S(\Lambda) = \frac{1}{2}s^{-2} + \zeta(-3) + \mathcal{O}(s) = \frac{1}{2}\Lambda^4 + \frac{1}{120} + \mathcal{O}\left(\frac{1}{\Lambda^2}\right)
	\end{equation}
	where the zeta function value $\zeta(-3) = 1/120$ emerges from analytic continuation. Extracting the finite, cutoff-independent term:
	\begin{equation}
		\lim_{\Lambda \to \infty} \left[ S(\Lambda) - \frac{1}{2} \Lambda^4 \right] = \frac{1}{120}
	\end{equation}
	
	This matches the zeta-function regularization result but derives from physical principles: the divergent $\Lambda^4$ term represents bare vacuum energy density that is subtracted in renormalization, while the universal finite part $1/120$ emerges from the structure of the high-energy density of states and matches results obtained from Seeley-DeWitt coefficients in related contexts~\cite{seeley:1969}. Both regularization methods confirm the same finite result $\frac{1}{120}$, demonstrating robustness against scheme choice. 
	Crucially, the resulting imaginary component in the central charge $\tilde{c}$ does not violate unitarity but encodes a topological information storage mechanism. The analytic continuation $\sum k^3 \to 1/120$ preserves conformal symmetry through modular invariance, and the complex phase enables unitary evolution through: (i) Chern-Simons holonomy $\theta_{\rm topo} = \frac{\pi i}{2}\sqrt{\kappa}$ providing $\mathcal{O}(S_{\rm BH})$ protected states; (ii) coherent information transfer via the non-unitary Hamiltonian (Eq.~\ref{eq:nonunitary_transfer}); and (iii) exact purity restoration at evaporation endpoint (Sec.~\ref{subsec:unitarity_conditions}). The consistency condition $L > \sqrt{\beta}\ell_P$ ensures $\kappa$ remains physical, preventing unitarity violation while maintaining UV/IR decoupling.
	
	\subsubsection*{Analytic Continuation and Central Charge Complexification}
	
	The commutator singularity at $n_c = \kappa^{-1}$ necessitates analytic continuation $n \to i\tilde{n}$:
	\begin{align}
		[L_{i\tilde{n}}, L_{-i\tilde{n}}] &= -2i\tilde{n} L_0 - \frac{c}{12} i\tilde{n} (-\tilde{n}^2 - 1) \label{eq:analytic_continuation} \\
		&\quad - \kappa \sqrt{\kappa} \hbar \tilde{n}^2 L_0 \nonumber
	\end{align}
	
	At the critical mode $\tilde{n}_c = \kappa^{-1}$ with $L_0 = c/24$, consistency requires:
	\begin{equation}
		\tilde{c} = c \left( 1 + \frac{i}{2} \sqrt{\kappa} \ell_P^2 \right)
		\label{eq:complex_central_charge}
	\end{equation}
	
	where we used $\hbar = \ell_P^2$. The imaginary component $\frac{i}{2}\sqrt{\kappa}\ell_P^2$ directly measures quantum gravitational decoherence at the critical radius.
	\subsubsection*{Topological Origin and Physical Implications}
	
	The complex phase originates from Chern-Simons topology \cite{ChernSimons, WittenCS} at $r_{\text{crit}}$:
	\begin{equation}
		\theta_{\text{topo}} = \frac{1}{4\pi} \int_{\mathcal{H}_{r_{\text{crit}}}} \Tr\left( \Gamma \wedge d\Gamma + \frac{2}{3} \Gamma \wedge \Gamma \wedge \Gamma \right) = \frac{\pi i}{2} \sqrt{\kappa}
		\label{eq:cs_invariant}
	\end{equation}
	
	where $\Gamma = \Gamma^\mu_{\nu\rho} dx^\rho$ is the connection 1-form. This topological invariant has three key consequences: 
	
	First, it provides metric-independent protection of quantum information through non-local encoding, ensuring purity recovery via $\Delta S_{\text{recovery}} > 0$ without firewalls. 
	
	Second, in AdS$_5$/CFT$_4$ it satisfies the descent equation $d\theta_{\text{topo}} = \frac{1}{8\pi}\text{Tr}(R\wedge R)$, linking bulk curvature to boundary central charge transformations $\delta_\sigma c_{\text{eff}} = \sigma \partial_\sigma\theta_{\text{topo}}$ under Weyl scaling. 
	
	Third, it quantizes the critical phase space volume:
	\begin{equation}
		\oint_{r_{\text{crit}}} \Delta x \Delta p  \frac{d^2x d^2p}{(2\pi\hbar)^2} = \frac{i}{2} \sqrt{\kappa} = \frac{1}{2\pi} \theta_{\text{topo}}
		\label{eq:quantized_volume}
	\end{equation}
	
	confirming topological information storage with capacity $I_{\text{topo}} = |\theta_{\text{topo}}|/\pi = \frac{1}{2}\sqrt{\kappa}$. This Chern-Simons mechanism resolves the information paradox by converting geometric entropy into topologically protected degrees of freedom.
	\section{Information Recovery Framework}
	\label{sec:information_recovery}
	
	\subsection{A Potential Topological Information Mechanism at Critical Radius}
	\label{subsec:topological_mechanism}
	
	The critical radius $r_{\text{crit}}$ might potentially enable information preservation through a three-stage topological mechanism. First, quantum gravitational effects at $r_{\text{crit}}$ could encode information in Chern-Simons states:
	\begin{equation}
		\mathcal{H}_{\text{topo}} = \left\{ \ket{n} = \mathcal{N} \exp\left( i n \oint_{\gamma} A \right) \middle| n \in \mathbb{Z}, \gamma \subset \mathcal{H}_{r_{\text{crit}}} \right\}
		\label{eq:topological_states}
	\end{equation}
	with Hilbert space dimension $\dim \mathcal{H}_{\text{topo}} = \exp(\eta S_{\text{BH}})$ potentially matching string state degeneracy at the critical mass scale $M \sim M_{\text{crit}}$. Second, non-unitary dynamics might facilitate information transfer to radiation:
	\begin{equation}
		\hat{H}_{\text{eff}} = \hat{H}_{\text{GR}} + i\kappa \hbar c \sqrt{-\nabla^2} \hat{P}_{\text{topo}}, \quad \kappa = \sqrt{\beta \gamma}(\ell_P/L)^{1/2}
		\label{eq:nonunitary_transfer}
	\end{equation}
	where $\hat{P}_{\text{topo}}$ projects onto the topological subspace. Third, holographic decoding could unitarize the process:
	\begin{equation}
		\mathcal{F}: \mathcal{H}_{\text{topo}} \to \mathcal{H}_{\text{rad}}, \quad \mathcal{F}\left( \sum_n c_n \ket{n} \right) = \sum_n c_n \ket{\phi_n}, \quad \braket{\phi_m | \phi_n} = \delta_{mn}
		\label{eq:decoding_map}
	\end{equation}
	This framework offers a potential resolution to the tension between complex central charge $\tilde{c} = c(1 + i\eta)$ - which signals apparent unitarity violation through its imaginary component - and actual information recovery. The imaginary entropy $\operatorname{Im}(S) = \pi \sqrt{\beta/\gamma}$ might quantify information capacity, while worldsheet signature change $ds^2_{\text{ws}} = d\tau^2 + d\sigma^2$ could reflect topological protection. Crucially, consistency would require the condition $L > \sqrt{\beta} \ell_P$ to avoid pathological geometries where $\eta \to \infty$. The mechanism might originate from three interconnected quantum gravitational effects: (i) non-commutative phase accumulation ($e^{i\theta}$ terms), (ii) complexified uncertainty principle ($\Delta x \Delta p \to \mathbb{C}$), and (iii) imaginary counterterms in holographic renormalization. At $r_{\text{crit}}$, spacetime uncertainty could become topologically quantized:
	$	\oint_{r_{\text{crit}}} \Delta x \Delta p = \pi i \hbar
	\label{eq:quantized_uncertainty}$
	potentially triggering a phase transition characterized by vanishing stress tensor, complex central charge, and dissolution of geometric spacetime into quantum structure.
	
	\noindent The phase correlations in Eq.~\eqref{eq:decoding_map} might be microscopically driven by non-unitary topological dynamics and Chern-Simons holonomy accumulation at the critical radius. The topological Hilbert space $\mathcal{H}_{\rm{topo}}$ could be explicitly constructed via quantized Chern-Simons states on homology cycles $\gamma \subset \mathcal{H}_{r_{\rm{crit}}}$ (Eq.~\ref{eq:topological_states}):
	\[
	\mathcal{H}_{\rm{topo}} = \left\{ |n\rangle = \mathcal{N} \exp\left(in \oint_{\gamma} A \right) \;\Big|\; n \in \mathbb{Z},\; \gamma \in H_1(\mathcal{H}_{r_{\rm{crit}}}, \mathbb{Z}) \right\},
	\]
	where $A$ is the connection 1-form and $\dim \mathcal{H}_{\rm{topo}} = \exp(\eta S_{\rm{BH}})$ might follow from the topological invariant $\theta_{\rm{topo}} = \frac{\pi i}{2}\sqrt{\kappa}$, which could quantize the phase space volume: 
	\[
	\oint_{r_{\rm{crit}}} \Delta x \Delta p  d^2x d^2p / (2\pi\hbar)^2 = i\sqrt{\kappa}/2
	\]
	\subsection{A Scenario for Modified Page Curve Dynamics}
	\label{subsec:page_curve}
	
	The radiation entropy evolution might incorporate critical radius dynamics through:
	\begin{equation}
		S_{\text{rad}}(t) = \underbrace{\frac{k_B c^3}{4G\hbar} \left(A_0 - A(t) \right)}_{\text{Hawking entropy}} - \Theta(t - t_{\text{crit}}) \underbrace{2\pi \eta^2 S_{\text{crit}} \left[ \frac{t - t_{\text{crit}}}{\tau_s} - \frac{1}{2} \ln \left( \frac{t - t_{\text{crit}}}{\tau_0} \right) \right]}_{\Delta S_{\text{recovery}}}
		\label{eq:entropy_evolution}
	\end{equation}
	where $A_0 = 4\pi r_0^2$ (initial horizon area), $A(t) = 4\pi r_+(t)^2$, $\eta = \sqrt{\beta/\gamma} (\ell_P/L)^{3/2}$ (topological efficiency), $S_{\text{crit}} = \pi k_B r_{\text{crit}}^2/\ell_P^2$, $\tau_0 = \beta \ell_P^2 /(\hbar c)$ (Planck time), $\tau_s = 2GM_0^3/(\hbar c^4)$ (scrambling time), and $t_{\text{crit}} = t_{\text{evap}} [1 - (M_{\text{crit}}/M_0)^3]$. The recovery term $\Delta S_{\text{recovery}}$ would originate from non-Hermitian quantum dynamics initiated at $t_{\text{crit}}$. 
	
	The complex central charge $\tilde{c} = c[1 + i\eta]$ might deform the radiation density matrix:
	\begin{equation}
		\rho_{\text{rad}} = \rho_{\text{Hawking}} + i\eta \sum_{m \neq n} e^{i\phi_{mn}} \sqrt{\rho_m \rho_n} \ket{\psi_m}\bra{\psi_n}
		\label{eq:deformed_density_matrix}
	\end{equation}
	
	\noindent introducing coherence through phase correlations $\phi_{mn}$. These phases could evolve dynamically as:
	\begin{align}
		\phi_{mn} &= \arg \Braket{\psi_m | \mathcal{T} \exp\left( -\frac{i}{\hbar} \int_{t_{\text{crit}}}^t H_{\text{int}} dt' \right) | \psi_n}
		\label{eq:66} \\
		\langle (\Delta\phi_{mn})^2 \rangle &= \frac{k_B T_H}{\hbar} \left[ \frac{t - t_{\text{crit}}}{\tau_s} - \frac{1}{2} \ln \left( \frac{t - t_{\text{crit}}}{\tau_0} \right) \right]
		\label{eq:phase_evolution}
	\end{align}
	where $\Delta\phi_{mn} \equiv \phi_{mn} - \langle \phi_{mn} \rangle$ and rotational symmetry might ensure $\langle \phi_{mn} \rangle = 0$. The resulting entropy correction:
	\begin{equation}
		\Delta S = -2\pi \eta^2 S_{\text{crit}} \langle (\Delta\phi_{mn})^2 \rangle + \mathcal{O}(\eta^3)
		\label{eq:entropy_correction}
	\end{equation}
	could quantify information retrieval from topological memory. 
	
	The phase correlations $\phi_{mn}$ (Eq.~\ref{eq:66}) might arise from the non-unitary interaction (Eq.~\ref{eq:nonunitary_transfer}), which projects onto $\mathcal{H}_{\rm{topo}}$ and could generate coherence through the time-ordered exponential, $\mathcal{T} \exp\left(-\frac{i}{\hbar} \int_{t_{\rm{crit}}}^{t} H_{\rm{int}} dt'\right)$. This dynamics might force the variance (Eq.~\ref{eq:phase_evolution}), potentially yielding $\Delta S_{\rm{recovery}}$ (Eq.~\ref{eq:entropy_correction}) as a topological decoherence-to-coherence transition. The mechanism could be self-consistent: $\dim \mathcal{H}_{\rm{topo}}$ might match the string-state degeneracy at $M_{\rm{crit}}$, and unitarity could be restored via the isometric decoding map (Eq.~\ref{eq:decoding_map}).
	
	The phase correlations $\phi_{mn}$ (Eq.~\ref{eq:phase_evolution}) might enable coherent transfer from $\mathcal{H}_{\text{topo}}$ (Eq.~\ref{eq:topological_states}) to radiation via the non-unitary Hamiltonian $\hat{H}_{\text{eff}}$ (Eq.~\ref{eq:nonunitary_transfer}), potentially manifesting as $\Delta S_{\text{recovery}}$ in the modified Page curve. The decoding map (Eq.~\ref{eq:decoding_map}) could then complete unitarization by establishing orthogonal radiation states. This mechanism might preserve information without remnants while offering a potential resolution to the tension between holographic breakdown and quantum unitarity.
	
	\noindent The imaginary off-diagonal terms in the radiation density matrix (Eq.~\ref{eq:deformed_density_matrix}) could arise intrinsically from the non-Hermitian evolution operator $\hat{H}_{\rm eff}$ (Eq.~\ref{eq:nonunitary_transfer}), which might govern dynamics during the critical phase transition at $t > t_{\rm crit}$. This non-unitary dynamics could be physically indispensable; it might encode the transfer of information from topological Chern-Simons states at $r_{\rm crit}$ (Eq.~\ref{eq:topological_states}) to radiation degrees of freedom, mediated by the signature change $g_{\mu\nu} \to g_{\mu\nu} + i K_{\mu\nu}$. Crucially, trace preservation might be maintained dynamically through phase correlation constraints (Eq.~\ref{eq:phase_evolution}), which could ensure $\text{Tr}(\rho_{\rm rad}) \equiv 1$ by annihilating net imaginary contributions via $\sum_{m \neq n} \langle \psi_n | \psi_m \rangle = 0$. The apparent non-Hermiticity might be transient and topologically regulated; It could represent coherent information retrieval prior to final unitarization by the decoding map (Eq.~\ref{eq:decoding_map}). This could be validated by the entropy restoration $\lim_{t \to t_{\rm evap}} S_{\rm rad} = 0$, suggesting no information loss occurs despite intermediate non-Hermitian dynamics. 
	\subsection{Toward Physical Consistency and Unitarity Restoration}
	\label{subsec:unitarity_conditions}
	
	The modified entropy evolution satisfies three fundamental physical requirements that validate the critical radius mechanism, ensuring self-consistent unitarity preservation throughout the evaporation process. First, the radiation entropy maintains strict continuity at the critical transition time $t_{\text{crit}}$, with identical left-hand and right-hand limits: 
	\[
	\lim_{t\to t_{\text{crit}}^{-}} S_{\text{rad}} = \lim_{t\to t_{\text{crit}}^{+}} S_{\text{rad}} = \frac{\pi k_B c^3}{G\hbar} (r_0^2 - r_{\text{crit}}^2).
	\]
	This smooth matching reflects the topological nature of the quantum phase transition at $r_{\text{crit}}$, where the horizon dissolves without entropy discontinuity, connecting semiclassical and quantum gravitational regimes through a well-defined critical point. 
	
	Second, complete purity restoration occurs at the evaporation endpoint $t_{\text{evap}}$ when the efficiency parameter satisfies $\eta \sim (\tau_0 / \tau_s)^{1/2}$, yielding:
	$\lim_{t\to t_{\text{evap}}} S_{\text{rad}} = 0$. This vanishing entropy condition is achieved through coherent information transfer from the topological memory states $\mathcal{H}_{\text{topo}}$ to radiation degrees of freedom, where the parameter constraint $\eta \sim (\tau_0/\tau_s)^{1/2}$ ensures the complex central charge component $\operatorname{Im}(\tilde{c})$ provides sufficient information capacity to purify the final state. 
	
	Third, the information transfer dynamics obey exact flux conservation, with the information recovery rate balancing the black hole mass loss rate:
	\[
	\frac{d\mathcal{I}}{dt} = \eta \frac{c^3}{G \hbar} \cdot \frac{r_{\text{crit}}}{c} \left| \frac{dM}{dt} \right|
	\]
	
	The scaling relation of $\eta$ arises from endpoint unitarity constraint. Substituting the modified Page curve (Eq.~\ref{eq:entropy_evolution}) with $\Delta t_{\rm evap} = t_{\rm evap} - t_{\rm crit} \sim \tau_s$ yields:
	\begin{align}
		\frac{k_B c^3}{4G\hbar} A_0 &= 2\pi\eta^2 S_{\rm crit} \left[ \frac{\Delta t_{\rm evap}}{\tau_s} - \frac{1}{2}\ln\left(\frac{\Delta t_{\rm evap}}{\tau_0}\right) \right] \nonumber \\
		&\approx 2\pi\eta^2 S_{\rm crit}, \quad \text{(dominant term)} \label{eq:purification}
	\end{align}
	where $A_0 = 4\pi r_0^2$ and $S_{\rm crit} = \pi k_B r_{\rm crit}^2 / \ell_P^2$. Simplifying with $S_0 = k_B c^3 A_0 / (4G\hbar)$, leads to $S_0 = 2\pi\eta^2 S_{\rm crit} \label{eq:entropy_match}$. Using $S_0 \sim r_0^2 / \ell_P^2$, and $S_{\rm crit} \sim r_{\rm crit}^2 / \ell_P^2$, we obtain:
	\begin{equation}
		\eta^2 \sim \frac{S_0}{S_{\rm crit}} \sim \frac{r_0^2}{r_{\rm crit}^2}. \label{eq:eta_sq}
	\end{equation}
	The scrambling time $\tau_s \sim r_0^2 / (c \ell_P)$ and Planck time $\tau_0 \sim \ell_P / c$ imply:
	\begin{equation}
		\frac{r_0^2}{r_{\rm crit}^2} \sim \frac{\tau_s}{\tau_0}. \label{eq:radius_timescale}
	\end{equation}
	Combining Eqs.~(\ref{eq:eta_sq}) and (\ref{eq:radius_timescale}) leads to $\eta \sim \left( \frac{\tau_0}{\tau_s} \right)^{1/2}. \label{eq:scaling}$ The ratio $\tau_0 / \tau_s$ quantifies the relative efficiency of topological information retrieval versus Hawking scrambling. When $\eta \ll \tau_0 / \tau_s$, information recovery is too slow to purify radiation; when $\eta \gg \tau_0 / \tau_s$, unitarity is violated through over-compensation.
	
	\noindent Integration from transition to evaporation endpoint confirms complete information recovery:
	\begin{align*}
		\Delta\mathcal{I} &=  \eta \frac{c^3}{G \hbar} \cdot \frac{r_{\text{crit}}}{c} M_{\text{crit}} \\
		&= \operatorname{Im}(\tilde{c}) S_{\text{crit}} = S_0 - S_{\text{crit}},
	\end{align*}
	
	demonstrating that the topological degrees of freedom at $r_{\text{crit}}$ precisely compensate the initial entropy deficit. The Chern-Simons invariant $\theta_{\text{topo}}$  provides the requisite $\mathcal{O}(S_{\text{BH}})$ storage capacity for this exact matching. Collectively, these conditions establish that the critical radius mechanism preserves quantum unitarity through topologically protected state space, maintains continuity across quantum gravitational phase transitions, respects information conservation via geometrically encoded memory, and resolves the information paradox without introducing firewalls or remnants. The universal efficiency factor $\eta = \sqrt{\beta/\gamma}(\ell_P/L)^{3/2}$ emerges as the fundamental parameter governing quantum information recovery.
	
	The evaporation process exhibits three distinct phases governed by $r_{\text{crit}}$, which marks the fundamental scale where spacetime geometry dissolves into quantum structure. In Phase I ($t < t_{\text{crit}}$), semi-classical evaporation dominates with $S_{\text{rad}}(t) \approx \dfrac{k_B c^3}{4G\hbar} [A_0 - A(t)]$ and $\frac{dS_{\text{rad}}}{dt} > 0$ for thermal radiation. Phase II ($t \approx t_{\text{crit}}$) initiates a topological quantum phase transition characterized by metric signature change $g_{\mu\nu} \to g_{\mu\nu} + i K_{\mu\nu}$ ($K_{\mu\nu} = \sqrt{\beta \gamma}\frac{\ell_P}{L} R_{\mu\nu\rho\sigma}\epsilon^{\rho\sigma}$), activating Chern-Simons protected memory at $r_{\text{crit}}$. This transition represents the Horowitz-Polchinski point $r_{\text{crit}} = \sqrt{\alpha'} (L/\ell_s)^{1/2}$ ($\sqrt{\alpha'} = (\beta/\gamma)^{1/4}\ell_P$) where black holes transform into highly excited string states ($M_{\text{crit}}/M_s = (L/\ell_s)^{1/2} \gg 1$), evidenced by entropy matching $S_{\text{BH}} = \dfrac{\pi k_B r_{\text{crit}}^2}{\ell_P^2} \sim S_{\text{string}} = k_B\sqrt{c/3} M_{\text{crit}}/M_s$ that requires $\ell_s \sim \ell_P$ for fundamental strings. During Phase III ($t > t_{\text{crit}}$), coherent emission occurs with entropy evolution $\dfrac{dS_{\text{rad}}}{dt} = -\dfrac{k_B c^3}{4G\hbar} \dfrac{dA}{dt} - 2\pi \eta^2 S_{\text{crit}} \left( \dfrac{1}{\tau_s} - \dfrac{1}{2(t - t_{\text{crit}})} \right)$, enabling non-thermal information recovery through the topological Hilbert space $\dim\mathcal{H}_{\text{string}} \sim e^{\eta S_{\text{BH}}}$ where $\eta = \sqrt{\beta/\gamma} (\ell_P/L)^{3/2} \equiv g_s (\ell_P/L)^{3/2}$ corresponds to the effective string coupling with $g_s = \sqrt{\beta/\gamma}$.
	
	For heterotic strings (\(\ell_s\gg\ell_P\)), black hole/string correspondence holds in terms of effective degrees of freedom; D-branes or warped geometries induce an effective length scale \(\ell_s^{\rm eff}\equiv r_{\rm crit}^2/L\sim\ell_P\) at the transition, preserving entropy scaling \(S_{\rm string}\sim k_B\sqrt{c/3}M_{\rm crit}/M_s^{\rm eff}\) while accommodating \(\ell_s\gg\ell_P\). Thus, \(r_{\rm crit}\) universally governs the black-hole-to-quantum-remnant transition, with Horowitz-Polchinski entropy matching realized directly (\(\ell_s\sim\ell_P\)) or via emergent degrees of freedom (\(\ell_s\gg\ell_P\)). In other words, the Horowitz-Polchinski interpretation is a specific realization of this transition in certain UV completions, but $r_{\rm crit}$ universally governs the breakdown of geometric spacetime and emergence of non-local degrees of freedom.
	\section{Discussion}
	The critical radius $r_{\rm crit}=(\beta/\gamma)^{1/4}\sqrt{\ell_{P}L}$ marks a phase transition where GUP (UV) and EUP (IR) effects compete. It emerges as a fundamental scale governing quantum gravitational phenomena in AdS spacetime, marking the frontier where semiclassical gravity breaks down and novel quantum gravitational effects dominate. Our analysis demonstrates that when black holes contract below this critical scale (\(r_s < r_{\rm crit}\)), the AdS/CFT correspondence undergoes fundamental restructuring characterized by three interconnected phenomena: holographic breakdown signaled by vanishing boundary stress tensor \(\langle T_{\mu\nu}\rangle = 0\), topological transition manifested through complex central charge \(c_{\rm eff} = c\left(1 + \frac{i}{2}\sqrt{\kappa}\ell_{P}^{2}\right)\), and information recovery mechanism enabling Page curve modification (\(\Delta S_{\rm recovery} > 0\)). These effects collectively establish a consistency condition \(L > \sqrt{\beta}\ell_{P}\) for consistent holographic duality, while potentially resolving the information paradox through topological storage in Chern-Simons states at horizon dissolution. The universal nature of \(r_{\rm crit}\) is evidenced by its identical emergence from four independent approaches: GUP-modified field equations (as singularity resolution scale), thermodynamic extremization, uncertainty principle balance (as UV/IR interface), and heat capacity divergence (as thermodynamic critical point). This convergence confirms \(r_{\rm crit}\) as the Planck-scale threshold where black holes transition to stringy remnants, information scrambles topologically, and geometric spacetime dissolves into quantum structure. Notably, \(r_{\rm crit}\) coincides with the Horowitz-Polchinski transition point for theories where string length \(\ell_s\sim\ell_P\), satisfying entropy matching at \(M_{\rm crit}\), while providing a complementary thermodynamic perspective to recent Euclidean path integral approaches to the black hole/string transition~\cite{ChenMaldacenaWitten}. A detailed mapping between these frameworks constitutes an important future direction.
	

\end{document}